\begin{document}

\title{An Efficient Construction Method Based on Partial Distance of Polar Codes with Reed-Solomon Kernel}

\author{Jianhan Zhao, Wei Zhang, Yanyan Liu

\thanks{
Jianhan Zhao, Wei Zhang are with the School of Microelectronics, Tianjin University, Tianjin 300072, China (e-mail: zjh1997@tju.edu.cn; tjuzhangwei@tju.edu.cn).

Yanyan Liu is with the Tianjin Key Laboratory of Photoelectronic Thin Film
Devices and Technology, Nankai University, Tianjin 300071, China (e-mail:
lyytianjin@nankai.edu.cn).

Digital Object Identifier -----}
}

\markboth{	,~Vol.~0, No.~0, month~2023}%
{ {\it \MakeLowercase .} An Efficient Construction Method Based on Partial Distance of Polar Codes with Reed-Solomon Kernel}

\maketitle

\begin{abstract} Polar codes with Reed-Solomon (RS) kernel have great potential in next-generation communication systems due to their high polarization rate. In this paper, we study  the polarization characteristics of RS polar codes and propose two types of partial orders (POs) for the synthesized channels, which are supported by validity proofs. By combining these partial orders, a Partial Distance-based Polarization Weight (PDPW) construction method is presented. The proposed method achieves comparable performance to Monte-Carlo simulations while requiring lower complexity.  Additionally, a Minimum Polarization Weight Puncturing (MPWP) scheme for rate-matching is proposed to enhance its practical applicability in communication systems. Simulation results demonstrate that the RS polar codes based on the proposed PDPW construction outperform the 3rd Generation Partnership Project (3GPP) NR polar codes in terms of standard code performance and rate-matching performance.

\end{abstract}

\begin{IEEEkeywords}
5G, polar codes, Reed-Solomon codes, rate-matching, partial order

\end{IEEEkeywords}

\IEEEpeerreviewmaketitle

\section{Introduction}\label{sec:intro}

\IEEEPARstart{P}{olar} codes presented by Arikan \cite{b1-1} were proved to achieve the capacity of binary-input symmetric discrete memoryless channels (BI-DMC). In a  polar code with block length $N$ and  rate of $R$, the key process is to select $NR$ sub-channels with high reliability to directly transmit information, thereby approximating the channel capacity. To decode polar codes, successive cancellation (SC) decoders were initially used, which have a time complexity of  $\mathcal{O}(NlogN)$. 

However, as the incomplete channel polarization in the finite regime, the performance of polar codes with short or moderate block lengths is not ideal. Therefore, researchers have focused on improving polar codes performance by developing decoding algorithms and construction methods. In terms of decoding algorithms,  the successive cancellation list (SCL) \cite{b1-2,b1-3,b1-4} decoding and the successive cancellation stack (SCS) \cite{b1-5,b1-6,b1-7} decoding have been proposed and improved, which achieve a good decoding performance.  Concatenating with a Cyclic Redundancy Check (CRC) code further improves the performance. In terms of construction methods, several algorithms have been proposed, such as Tal-Vardy method  \cite{b1-8} and Gaussian approximation (GA) \cite{b1-9}. Practical communication systems such as the 5G system require channel coding with flexible codewords to take full advantage of limited resources and accommodate changeable code rates. To address this, Quasi-uniform puncturing (QUP) \cite{b1-10} and Information Set Approximation Puncturing (ISAP) \cite{b1-11} have been proposed, demonstrating good rate-matching performance.  Due to these excellent performance characteristics, polar codes  have been accepted as the coding scheme for the 5G new radio (5G NR)  control channel \cite{b1-12}.

Multi-kernel polar codes have recently gained attention in \cite{b1-13,b1-14,b1-15,b1-16,b1-17,b1-18} due to their superior performance. Specifically, polar codes over $\mathcal{GF}(q)$ with  Reed-Solomon (RS) kernel can achieve a higher polarization rate than the Arikan kernel\cite{b1-16,b1-19}.  Several pioneering works have been conducted to design polar codes with RS kernel.  Results in \cite{b1-16} demonstrate  that the polar codes with  4-dimension RS kernel could already achieve a better performance than most binary kernels. Cheng $et$ $al$. have proposed a look-up table-based encoding algorithm \cite{b1-20} for polar codes with  RS kernel, which significantly reduces the encoding complexity. In \cite{b1-21}, an effective decoding scheme using "dynamic frozen symbols" is proposed to improve decoding performance. Furthermore,  \cite{b1-22} presents a piecewise sequence rate-matching scheme to achieve flexible code rates.

However,  few studies have investigated the construction of polar codes with RS kernel. Most previous studies, as mentioned above, have relied on Monte-Carlo simulations, which are not efficient and are difficult to apply to practical communication systems. In  recent studies, partial orders (POs) were shown to exist between the reliabilities of the sub-channels of polar codes with Arikan kernel \cite{b1-23,b1-24,b1-25}. Based on these partial orders, construction methods with low complexity were proposed in \cite{b1-26,b1-27}.  These works enlighten us to pursue an efficient construction scheme of polar codes with RS kernel for practical transmission channels. 

In this paper, we focus on studying the polarization characteristics of polar codes with RS kernel, especially the relationship between the sub-channel index and its channel reliability. We first present partial orders for polar codes with RS kernel that are independent of the underlying channel $W$. These POs are based on the observation of the relationship between the reliability of symbol sub-channels  and the q-ary representation of their channel indexes. Combing the POs results, we propose a  Partial Distance-based Polarization Weight (PDPW) construction method with lower complexity, while ensuring performance guarantees. Finally, based on the polarization weight, an efficient rate-matching scheme of polar codes with RS kernel is proposed which can  be applied to more flexible communication systems. 

The remainder of the paper is organized as follows. In Section \ref{sec:section2} we provide a brief overview of polar codes and the Reed-Solomon (RS) kernel. Section \ref{sec:section3} proposes two partial orders for the polarization properties of polar codes with RS kernel and describes the proposed PDPW construction method.  In Section \ref{sec:section4}, a Minimum Polarization Weight Puncturing (MPWP) scheme is presented to evaluate the rate-matching performance. Then, in Section \ref{sec:section5}, the simulation results  of the proposed algorithms are presented. Finally, Section \ref{sec:section6} concludes this work.

\section{PRELIMINARIES}\label{sec:section2}

\newtheorem{Definition}{Definition}
\newtheorem{Example}{Example}

\subsection{Notations and Channel Parameters}
Throughout the paper, $n$, $t$ and $m$ is non-negative integers. Let  $N$ and $N_{b}$ denote the symbol length and bit length of a code, respectively. Denote  $u_{0}^{N-1}$  as the row vector $\left\{u_{0},u_{1},...,u_{N-1}\right\}$.

\begin{Definition} \label{Definition2-1}
Let $\mathcal{X}$ and $\mathcal{Y}$ be the $q$-ary input alphabet and output  alphabet for a channel  $W:\mathcal{X} \to \mathcal{Y}$. We define the mutual information of channel $W$  as follows:

\begin{equation}
I(W)=\sum_{x \in \mathcal{X}} \sum_{y \in \mathcal{Y}} \frac{1}{q} W(y \mid x) \log \frac{W(y \mid x)}{\frac{1}{q} \sum_{x^{\prime} \in \mathcal{X}} W\left(y \mid x^{\prime}\right)}
\end{equation}

\end{Definition}

\begin{Definition} \label{Definition2-2}
The Bhattacharyya parameter of a $q$-ary channel $W$ is defined
as
\begin{equation}
Z(W)=\frac{1}{q(q-1)} \sum_{\substack{x \in \mathcal{X}, x^{\prime} \in \mathcal{X}, x \neq x^{\prime}}} Z_{x, x^{\prime}}(W)
\end{equation}
where for any input symbols $x,x^{\prime} \in \mathcal{X}$ the Bhattacharyya parameter is defined as 
\begin{equation}
Z_{x, x^{\prime}}(W)=\sum_{y \in \mathcal{Y}} \sqrt{W(y \mid x) W\left(y \mid x^{\prime}\right)}
\end{equation}

\end{Definition}

\begin{Definition} \label{Definition2-3}
Let $\mathcal{X}$ and $\mathcal{Y}$ be the $q$-ary input alphabet and output  alphabet for a channel  $W:\mathcal{X} \to \mathcal{Y}$ with the basic input $x\in \mathcal{X}$ and $y\in \mathcal{Y}$. Denote the $\mathcal{D}_x=\{y \in \mathcal{Y} \mid W(y \mid x)>W(y \mid$ $\left.\left.x^{\prime}\right), \forall x^{\prime} \in \mathcal{X}, x^{\prime} \neq x\right\}$. The  average error probability of the maximum-likelihood estimation of channel $W$ is defined as :

\begin{equation}
P_e(W)=\frac{1}{q} \sum_{x \in \mathcal{X}} \sum_{y \in \mathcal{D}_x^c} W(y \mid x) .
\end{equation}

where $P_e(W)$  is bounded as
\begin{equation}
P_e(W)\leq(q-1)Z(W)
\end{equation}

\end{Definition}

\begin{Definition} \label{Definition2-4}
(Stochastic Degradation \cite{b1-8}): Let $W_{1},W_{2},W_{3}$ be the  channels $\mathcal{X} \rightarrow \mathcal{Z}$ , $\mathcal{X} \rightarrow \mathcal{Y}$ and  $\mathcal{Y} \rightarrow \mathcal{Z}$ , we say $W_{1}$ is stochastically degraded with respect to $W_{2}$, denoted as $W_{2} \succeq W_{1}$ if their  relationship exist such that 
\begin{equation}
W_{1}(z \mid x)=\sum_{y \in \mathcal{Y}} W_{2}(y \mid x) W_{3}(z \mid y)
\end{equation}

\end{Definition}

\begin{Definition} \label{Definition2-5}
(Reliability measure): For two synthetic channels  $W_{N}^{(i)}$ and  $W_{N}^{(j)}$,
if $W_{N}^{(i)}\succeq W_{N}^{(j)}$ or $i\succeq j$, we say  $W_{N}^{(i)}$ is more reliable than $W_{N}^{(j)}$ then 
\begin{equation}
I( W_{N}^{(i)}) \succeq I( W_{N}^{(j)})
\end{equation}

\begin{equation}
Z( W_{N}^{(i)})  \preceq Z( W_{N}^{(j)})
\end{equation}

\begin{equation}
P_e( W_{N}^{(i)}) \preceq P_e( W_{N}^{(j)})
\end{equation}

\end{Definition}

\subsection{Arikan Polar Codes }
For an Arakan polar code with length $N=2^{n}$, the indices of the sub-channels $W_{N}^{(i)}$ after channel polarization can be expressed as  $\left\{0,1,...,N-1\right\}$.  Let $\mathcal{I}\subseteq \left\{0,1,...,N-1\right\}$ be the information-bit set that contains the indices of the $K$ most reliable sub-channels carrying information bits. Its complement $\mathcal{F}$ represents the frozen-bit set. The generator matrix is $F_{N}=G_{2}^{\otimes n}$, where kernel $G_{2}\triangleq \begin{bmatrix} 1 & 0 \\ 1 & 1 \end{bmatrix}$ and $\otimes$ denotes the  Kronecker product of the matrix with itself. The block error performance of polar codes can be expressed as 
\begin{equation}
P_B=o\left(2^{-N^\epsilon}\right), \quad \forall \epsilon<E(G)
\end{equation}
where the threshold $E(G)$ of $\epsilon$ is the exponent of kernel $G$ as introduced in \cite{b1-15}. For instance, the Arikan kernel $G_2$ has an exponent of 0.5, and  kernels with larger exponents can achieve better performance.

\subsection{Reed-Solomon Kernels  }
 Reed-Solomon Kernels can be regarded as a generalization of Arıkan kernel. In the case of a channel with input alphabet $\mathcal{GF}(q=2^t)$, one can  obtain the $q$-ary matrix $G_{q}$. The submatrix of $G_{q}$ that consists of the $i$-th row to the $(q-1)$-th row is a generator matrix of a generalized Reed-Solomon code. Therefore, we refer to the $q$-ary matrix $G_{q}$ as the Reed-Solomon kernel,e.g.
\begin{equation}
G_{q}=\left(\begin{array}{cccccc}
1 & 1 & \cdots & 1 & 1 & 0 \\
\alpha_{q-2}^{q-2} & \alpha_{q-3}^{q-2} & \cdots  & \alpha_1^{q-2}&1 & 0 \\
\vdots & \vdots & \ddots & \vdots & \vdots & \vdots \\
\alpha_{q-2}^1 & \alpha_{q-3}^1 & \cdots  & \alpha_1^1 & 1 & 0 \\
\alpha_{q-2}^0 & \alpha_{q-3}^0 & \cdots & \alpha_1^0 & 1 & \gamma
\end{array}\right)
\end{equation}
where  $\alpha_{i}$ are primitive element of $\mathcal{GF}(q)$ and $\gamma$ is a  non-zero element of  $\mathcal{GF}(q)$.

By analogy with the Arikan kernel, we can construct  an RS polar code of symbol length $N=q^{m},m \in \mathbb{Z}^{+}$  over $\mathcal{GF}(q)$. For a $q$-ary symbols information  sequence $s_{0}^{N-1}$, its corresponding encoded codeword can be expressed as  $c_{0}^{N-1}=s_{0}^{N-1}F_N=s_{0}^{N-1}G_{q}^{\otimes m}$. 
If symbol $c_{i}$ are transmitted over a memoryless output-symmetric channel $W(y|c)$, then after channel polarization, the $q$-ary  synthetic sub-channels with their respective transition probabilities can be defined as follow:

\begin{equation}
W_{N}^{(i)}\left(y_0^{N-1}, s_0^{i-1} \mid s_i\right)=\frac{1}{q^{N-1}} \sum_{s_{i+1}^{N-1}} \prod_{i=0}^{N-1} W\left(y_{i} \mid\left(s_{0}^{N-1} F_N\right)_i\right) \label{ei}
\end{equation}
More specifically, let $(i_{m-1},...,i_1,i_0)$ be the $q$-ary representation of the index $i$,i.e.,
\begin{equation}
i=\sum_{k=0}^{m-1} i_k q^{k} \label{e6}
\end{equation}
Then, we define the $W_N^{(i)}$ as 
\begin{equation}
W_N^{(i)}=\left(\left(\left(W^{i_0}\right)^{i_1}\right)^{\cdots}\right)^{i_{m-1}}
\end{equation}

In addition, for polar codes with multi-kernel, the kernel exponent is represented as 
\begin{equation}
E(G)= ln(q!)/(q lnq)
\end{equation}
For instance, when $q=2^{2}$, the exponent of the Reed Solomon kernel $G_{4}$ is $ln 24/(4 ln 4) \approx 0.57312$, which is larger than the exponents of most binary linear kernels   in \cite{b1-15}, despite its small and simple structure. Therefore, the RS kernel achieves a higher polarization rate, and its performance is better when constructing codewords of the same block length.

\begin{figure}[!h]
\centering
\includegraphics[width=3.2in]{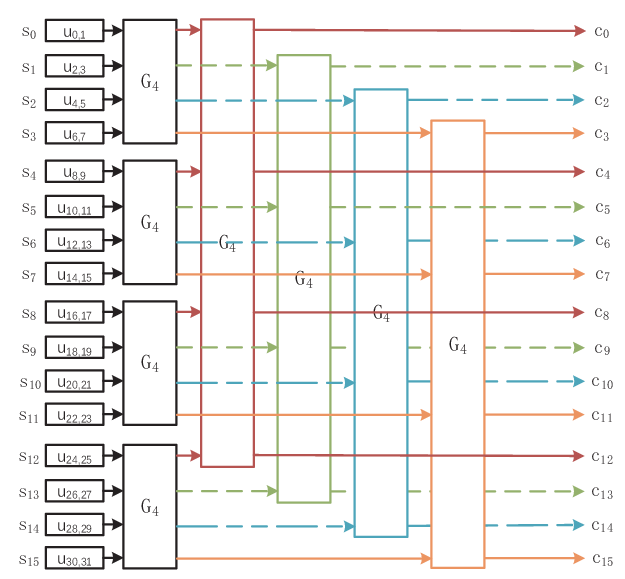}
\caption{  Polar codes with RS kernel with $q=4$ and $m=2$.}
\label{fig:figure1}
\end{figure}

In practice, taking $G_{4}$ kernel as an example, 
\begin{equation}
G_{4}=\left[\begin{array}{cccc}
1 & 1 & 1 & 0 \\
\alpha & \alpha^2 & 1 & 0 \\
\alpha^2 & \alpha & 1 & 0 \\
1 & 1 & 1 & \alpha
\end{array}\right]
\end{equation}
the scheme of RS polar codes with $q=4$ and $m=2$ using the $G_4$ kernel is illustrated in Fig. \ref{fig:figure1}.

When constructing polar codes using the $G_4$ kernel, every two  bits are mapped into a $q-$ary symbol over  $\mathcal{GF}(q)$ as shown below:
\begin{equation}
(0,0) \leftrightarrow 0, \quad(0,1) \leftrightarrow \alpha, \quad(1,0) \leftrightarrow \alpha^2, \quad(1,1) \leftrightarrow \alpha^3.
\end{equation}
Note that the mapping is not unique and the addition and multiplication  are based on the $\mathcal{GF}(q)$ operations in  \cite{b1-20}.

\subsection{Rate-matching Scheme}
In practical  communication system, flexible code length and rate are required. Therefore, the standard codeword needs to be adjusted utlizing rate-matching method before transmission to the channel. A typical rate-matching scheme for RS polar codes  is  illustrated in  Fig.\ref{fig:figure2}. As mentioned  in the  previous subsection, the $N$ size  RS polar codes contain fixed $q^m$ symbols ($N_{b}=t\times q^m$ bits). After a specific rate-matching algorithm, the actually transmitted code block length in bits is $M_b$, where  $M_b<N_{b}$. Thus, the actual transmission block rate is
defined as $R = K_{b}/M_b$.
\begin{figure}[!h]
\centering
\includegraphics[width=3.2in]{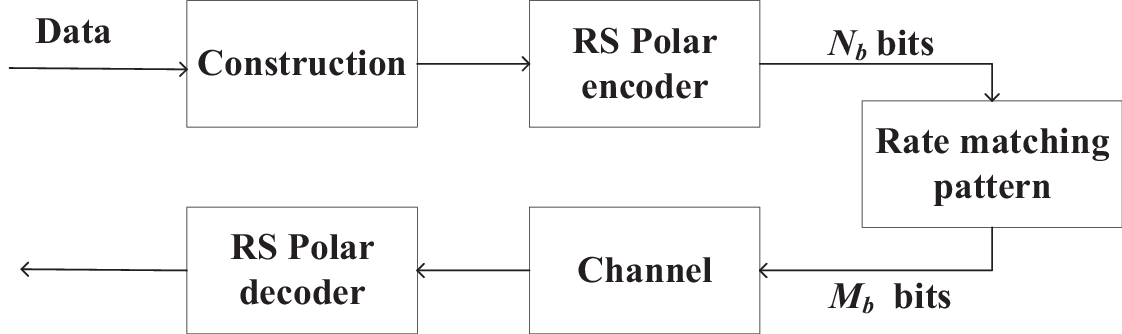}
\caption{ Rate-matching scheme for polar codes with RS kernel}
\label{fig:figure2}
\end{figure}
On the receiver side, the decoding structure remains the same. Since the $N_{b}-M_b$ bits are not sent over the channel, their initial log-likelihood ratios (LLR) for the decoder are set to zero or infinity depending on different rate-matching algorithms.

\section{An Efficient Construction of  POLAR CODES WITH
REED-SOLOMON KERNEL }\label{sec:section3}
In this section, we first investigate  the reliability relationship of sub-channels in a single-layer RS kernel. We then introduce the concept of partial orders (POs) for RS polar codes. Finally, we utilize these POs to propose an efficient construction method for polar codes with RS kernel.

\subsection{ Channel Degradation of Polar Codes with RS Kernel }

Denote the subchannel $W_{G_{q}}^{(i)}$ for a $q \times q$  RS kernel $G_{q}$ as 
\begin{equation}
W_{G_{q}}^{(i)} \triangleq  s_i \rightarrow\left(y_0^{q-1}, s_0^{i-1}\right)
\end{equation}
and the transition probabilities as 
\begin{equation}
W_{G_{q}}^{(i)}\left(y_0^{q-1}, s_0^{i-1} \mid s_i\right)=\frac{1}{q^{q-1}} \sum_{s_{i+1}^{q-1}} W_{G_{q}}\left(y_0^{q-1} \mid s_0^{q-1}\right) \label{e1}
\end{equation}
where 
\begin{equation}
W_{G_{q}}\left(y_0^{q-1} \mid s_0^{q-1}\right) \triangleq \prod_{i=0}^{q-1} W\left(y_i \mid c_i\right)=\prod_{i=0}^{q-1} W\left(y_i \mid\left(s_0^{q-1} G_{q}\right)_i\right)\label{e2}
\end{equation}
As the RS kernel $G_{q}$ is invertible, the following equation holds
\begin{equation}
\sum_{i=0}^{q-1} I\left(W_{G_{q}}^{(i)}\right)=q I(W)
\end{equation}

Given a $q$-ary DMC $W:\left\{0,1,...,q-1\right\} \to \mathcal{Y}$, denote the   transition probabilities of channel $\tilde{W}:\left\{0,...,q-1\right\} \to \mathcal{Y}\times \left\{0,...,q-1\right\}$,

\begin{equation}
\tilde{W}(y, r \mid c)=\frac{1}{q} W(y \mid s+r)
\end{equation}

Then, let $(W \circledast P)$   denote the channel with transition probabilities
$(W \circledast P)\left(y_1, y_2 \mid c\right)=W\left(y_1 \mid c\right) P\left(y_2 \mid c\right)$ and the $W^{\circledast k}$ denote $k$  operations on $W$ itself.
\begin{equation}
W^{\circledast k}\left(y_1^k \mid c\right)=\prod_{j=1}^k W\left(y_j \mid c\right)
\end{equation}
In addition,  it is easy to prove that 
\begin{equation}
(W^{\circledast k})^{\circledast h}=W^{\circledast kh}=\prod_{j=1}^{kh} W\left(y_j \mid c\right)
\end{equation}

Since the RS kernel $G_{q}$ is invertible and polarizing \cite{b1-16},  for all $q$-ary DMCs $W$, there exists a
 channel $W_{G_{q}}^{(i)}$ which is statistically equivalent to either $\tilde{W}^{\circledast k}$ or $W\circledast\tilde{W}^{\circledast k-1}$, where $i\in \left\{0,1,...,q-1\right\}$ and $k\geq2$.  These equivalences can be denoted as $W_{G_{q}}^{(i)} \equiv \tilde{W}^{\circledast k}$ and $W_{G_{q}}^{(i)} \equiv W\circledast\tilde{W}^{\circledast k-1}$, where $\equiv$ means that both channels are stochastically degraded with respect to each other.

Now, we  introduce the concept of partial distance:
\begin{Definition} \label{Definition3-1}
For a kernel $g: \mathcal{X}^q \to \mathcal{X}^q $, the  partial distance $D_i$ is given as 
 \begin{equation}
\begin{aligned}
D_i =\min _{v_{i+1}^{q-1}, w_{i+1}^{q-1}} d\left(g\left(s_0^{i-1}, x, v_{i+1}^{q-1}\right), g\left(s_0^{i-1}, x^{\prime}, w_{i+1}^{q-1}\right)\right)
\end{aligned}
\end{equation}
where $a,b \in \mathcal{X}^q$, $d(a,b)$ is the Hamming distance between $a$ and $b$.
\end{Definition}

Thus, we can define   the RS kernel as $G_q=[g_0^T,...,g_{q-1}^T]^T$.
Since it is  generated by a Reed-Solomon code, which is a  maximum distance separable code, the   partial distance of RS kernel  is $D_i=i+1$.
Then, we can rewrite $W_{G_{q}}^{(i)}$ using equation (\ref{e1}) and (\ref{e2}) as follows: 

\begin{equation}
W_{G_{q}}^{(i)}=\frac{1}{q^{q-1-i}} \sum_{c_{0}^{q-1}\in\Lambda(s_0^{i})}  \prod_{k=0}^{q-1} W\left(y_k \mid c_k\right)
\end{equation}
where $\Lambda(s_0^{i})\subset\left\{0,...,q-1\right\}^{q-1}$ and $c_{0}^{q-1}$ satisfying
\begin{equation}
c_0^{q-1}=\sum_{j=0}^{i-1} s_j g_j+s_i g_i+\sum_{j=i+1}^{q-1} v_j g_j \label{e3}
\end{equation}
for some $v_{i+1}^{q-1}\in\left\{0,...,q-1\right\}^{q-1-i}$. Let $g_\delta=\sum_{j=i+1}^{q-1} \delta_j g_j$ be a codeword and satisfying 
 \begin{equation}
\min _{v_{i+1}^{q-1}, w_{i+1}^{q-1}} d\left(g_i\left(s_0^{i-1}, x, v_{i+1}^{q-1}\right), g_\delta\left(s_0^{i-1}, x^{\prime}, w_{i+1}^{q-1}\right)\right)=D_i
\end{equation}
Due to the linearity of $\left\langle g_{i+1}, \ldots, g_{q-1}\right\rangle$, it's equivalent to say that $c_{0}^{q-1}\in\Lambda(s_0^{i})$ if and only if 
\begin{equation}
c_0^{q-1}=\sum_{j=0}^{i-1} s_j g_j+s_i (g_i+g_\delta )+\sum_{j=i+1}^{q-1} v_j g_j \label{e4}
\end{equation}

Therefore, we can define $G_q^{\prime}=\left[g_0^T, \ldots, g_i^{\prime T}, \ldots, g_{q-1}^T\right]^T$, where $g'_i=g_i+g_\delta$. At this point, the partial distance of $g'_i$ in $G^{\prime}_q$ is equal to $D_i$. Furthermore, the channels $W_{G_{q}}^{(i)}$ and $W_{G'_{q}}^{(i)}$ are equivalent based on (\ref{e3}) and (\ref{e4}).

Now, consider a channel $W_{g,q}^{(i)}$ where a genie provides extra information $(s^{q-1}_{i+1})$ to the decoder of $W_{G'_{q}}^{(i)}$,  then the  $W_{G'_{q}}^{(i)}$ can be seen as a degradation of the channel $W_{g,q}^{(i)}$. The transition probabilities of this $W_{g,q}^{(i)}$ are defined as 

\begin{equation}
W_{g,q}^{(i)}\left(y_0^{q-1}, s_0^{i-1}, s_{i+1}^{q-1}\mid s_i\right)=\frac{1}{q^{q-1}}  \prod_{j=0}^{q-1} W\left(y_i \mid\left(s_0^{q-1} G'_{q}\right)_j\right)
\end{equation}
Since the matrix $G_q^{\prime}$ is invertible, referring to the definition of $D_i$, it follows that there exists a set $J= \left\{j\vert j\in \left\{0,1,...,q-1\right\}\right\}$ and  $\left| J \right|=D_i$  such that the genie-aided channel $W_{g,q}^{(i)}$ can be expressed as below:

\begin{equation}
\begin{aligned}
W_{g,q}^{(i)} & \left(y_0^{q-1}, s_0^{i-1}, s_{i+1}^{q-1} \mid s_i\right) \\
= & \left(\frac{1}{q^{|J|}} \prod_{j \in J} W\left(y_j \mid s_i G'_{q(i,j)}+(s G_q^{\prime})_j-s_i G'_{q(i,j)}\right)\right) \\
& \cdot\left(\frac{1}{q^{q-|J|-1}} \prod_{j \in J^c} W\left(y_j \mid(s G_q^{\prime})_j\right)\right)
\end{aligned}
\end{equation}

It satisfies that  the second term on the right-hand side of the above equality is independent of the input $s_i$. Thus, the $W_{g,q}^{(i)}$ is 
equivalent to either $\tilde{W}^{\circledast D_i}$ or $W\circledast\tilde{W}^{\circledast D_i-1}$ and $Z\left(\tilde{W}^{\circledast D_i}\right)=Z\left(W \circledast \tilde{W}^{\circledast D_i-1}\right)=Z(W)^{D_i}$.
As  the   partial distance of RS kernel  is $D_i=i+1$ and $0<Z(W)<1$, the ordering of Bhattacharyya parameter is $Z(W_{g,q}^{(i+1)}) \preceq Z(W_{g,q}^{(i)})$. Note that  $W_{G'_{q}}^{(i)}$ can be seen as a degradation of channel $W_{g,q}^{(i)}$ which can be expressed as
\begin{equation}
Z(W_{G'_{q}}^{(i)})= Z(W_{g,q}^{(i)})+\Delta(W,i)
\end{equation}
where $\Delta(W,i)$ is the incremental function of the Bhattacharyya parameter after channel degradation, which is related to the input channel $W$ and the sub-channel index $i$. Note that in the single-layer RS kernel, the degradation decreases with increasing channel index $i$. Therefore, it's easy to express that the relationship for the channel  reliability of the sub-channels in RS kernel  as $W_{G_q}^{(i)} \preceq W_{G_q}^{(i+1)}$.

\subsection{ Partial Orders of RS Polar Codes }
 \newtheorem{Proposition}{Proposition}
\begin{Definition} \label{Definition3-2}
(Addition Operator): Denote $(i_{m-1},...,i_1,i_0)$ as the  $q$-ary representation of the index $i$ in (\ref{e6}). Given $k\in\left\{0,1,...,m-1\right\}$, the Addition Operator at position $k$ maps $i$ into $A^{(k)}(i)\in\left\{0,1,...,N-1\right\}$. If $i_k=q-1$ then $A^{(k)}(i)=i$. Otherwise, the $q$-ary representation of $A^{(k)}(i)$ is defined as 

\begin{equation}
\left(A^{(k)}(i)\right)_{\ell}= \begin{cases}i_{\ell}+1, & \ell=k, \\ i_{\ell}, & \ell \neq k .\end{cases}
\end{equation}
\end{Definition}

\begin{Proposition} \label{Proposition1}
\rm
Let $W$ be a $q$-ary  memoryless output symmetric channel $W$ and the synthetic sub-channel $W_N^{(i)}$ is  obtained from $W$ by applying (\ref{ei}). Then, for any $i \in  \left\{0,1,...,N-1\right\}$, $W_{N}^{(i)} \preceq  W_{N}^{(A^{(k)}(i))}$.
\end{Proposition}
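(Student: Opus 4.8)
The plan is to exploit the recursive multi-layer structure of $W_N^{(i)}$ together with the single-layer monotonicity established in the previous subsection. Recall that the $q$-ary representation $(i_{m-1},\ldots,i_1,i_0)$ of the index determines the nested construction $W_N^{(i)}=(((W^{i_0})^{i_1})^{\cdots})^{i_{m-1}}$, in which each digit $i_k$ selects one sub-channel of a single RS kernel at layer $k$. Because the Addition Operator $A^{(k)}$ alters only the digit in position $k$ (raising $i_k$ to $i_k+1$ when $i_k\neq q-1$, and otherwise leaving $i$ unchanged, in which case the claim is immediate by reflexivity of degradation), I would isolate the effect of this single change and then propagate the resulting order through the remaining layers.

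Concretely, I would fix the unchanged lower digits and denote by $V$ the intermediate channel produced after applying layers $0,\ldots,k-1$ with the digits $i_0,\ldots,i_{k-1}$; this $V$ is common to both $W_N^{(i)}$ and $W_N^{(A^{(k)}(i))}$. Writing $V^{(j)}$ for the $j$-th single-kernel synthetic channel obtained from $V$, the two indices diverge at layer $k$: index $i$ selects $V^{(i_k)}$ while $A^{(k)}(i)$ selects $V^{(i_k+1)}$. The single-layer result $W_{G_q}^{(j)}\preceq W_{G_q}^{(j+1)}$, which holds for every $q$-ary input channel, applies with input $V$ and gives $V^{(i_k)}\preceq V^{(i_k+1)}$. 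The outer layers $k+1,\ldots,m-1$ then apply the identical digits $i_{k+1},\ldots,i_{m-1}$ to both channels, so it remains to carry this degradation through them. For this I would invoke the preservation of stochastic degradation under the kernel transform: if $V_1\preceq V_2$, then $V_1^{(j)}\preceq V_2^{(j)}$ for every fixed sub-channel index $j$. Applying this once per outer layer and composing the resulting degradations yields $W_N^{(i)}\preceq W_N^{(A^{(k)}(i))}$, as required.

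I expect the degradation-preservation step to be the main obstacle, since the layer-by-layer induction above is routine once it is in hand. The reason it holds is that each synthetic sub-channel is obtained from the product channel formed by $q$ independent uses of $W$ through the fixed, channel-independent operations of applying the invertible kernel $G_q$, appending the past symbols to the output, and marginalizing over the future symbols. Stochastic degradation is preserved under such products, so $V_1\preceq V_2$ implies that the corresponding product channels are ordered with a degrading channel acting coordinate-wise on the $y$-outputs; since the remaining extraction does not depend on the channel, the order descends to each $V_1^{(j)}\preceq V_2^{(j)}$. Making this precise requires exhibiting the degrading intermediate channel explicitly and verifying that it commutes with the kernel transform, in analogy with the degradation-preservation lemma known for the Arikan kernel, and this is where the bulk of the verification lies.
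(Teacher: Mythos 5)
Your proposal follows essentially the same route as the paper's own proof: apply the single-layer monotonicity $W_{G_q}^{(i_k)} \preceq W_{G_q}^{(i_k+1)}$ at layer $k$ (with the common intermediate channel as input), then propagate the resulting degradation through the unchanged outer layers. If anything, your version is more careful than the paper's, which simply asserts that ``the subsequent polarization process remains unchanged, so the final channel reliability is higher,'' whereas you explicitly isolate the degradation-preservation lemma that this assertion silently relies on, handle the $i_k=q-1$ edge case, and sketch why the lemma holds.
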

\begin{proof} \label{proof1}
Let $i$ be the index of synthetic sub-channel $W_N^{(i)}$, and let the $q$-ary representation of $i$ be  $(i_{m-1},...,i_1,i_0)$, where $i \in  \left\{0,1,...,N-1\right\}$ and $i_{k}\in\left\{0,1,...,q-1\right\}$. The $A^{(k)}(i)$ operation is the selection of different split channels in a single-layer RS kernel. As shown in the previous subsection, for the same  $q \times q$ RS kernel $G_q$, the reliability representation of  sub-channels can be express as $W_{G_q}^{(i_k)} \preceq W_{G_q}^{(i_k+1)}$.
This means that compared to sub-channel $i$, the reliability of the input channel of sub-channel $A^{(k)}(i)$ increases from the $k$-th layer kernel, while the subsequent polarization process remains unchanged, so the final channel reliability is higher. Thus, for any $i \in  \left\{0,1,...,N-1\right\}$, $W_{N}^{(i)} \preceq  W_{N}^{(A^{(k)}(i))}$ which  concludes the proof.
\end{proof}

\begin{Example} \label{Example1}
\rm
Consider the RS polar code over $\mathcal{GF}(4)$ with  $m=3$ and index $i=25$, its $q$-ary representation is $(1,2,1)$. Then, after $A^{(1)}(25)$, we get the  $i=29$ with $(1,3,1)$. Applying the  Proposition 1, we conclude $W_{64}^{(25)} \preceq  W_{64}^{(29)}$.

\end{Example}

\begin{Definition} \label{Definition3-3}
(Left-Swap Operator): Denote $(i_{m-1},...,i_1,i_0)$ as the  $q$-ary representation of the index $i$ in (\ref{e6}). Given $k_1,k_2\in\left\{0,1,...,m-1\right\}$ and $k_1<k_2$, the Left-Swap operator at position $k_1$ and $k_2$ maps $i$ into $L^{(k_1,k_2)}(i)\in\left\{0,1,...,N-1\right\}$. If $i_{k_1}\leq i_{k_2}$ then $L^{(k_1,k_2)}(i)=i$. Otherwise, the $q$-ary representation of $L^{(k_1,k_2)}(i)$ is defined as 

\begin{equation}
\left(L^{(k_1,k_2)}(i)\right)_{\ell}= \begin{cases}i_{k_2}, & \ell=k_1,\\ i_{k_1}, & \ell = k_2 .\\ i_{\ell}, & \ell  \notin \left\{k_1,k_2\right\} .\end{cases}  \label{e5}
\end{equation}

\end{Definition}

\begin{Proposition} \label{Proposition2}
\rm
Let $W$ be a $q$-ary  memoryless output symmetric channel $W$ and the synthetic sub-channels $W_N^{(i)}$ are  obtained from $W$ by applying (\ref{ei}). Then, for any $i \in  \left\{0,1,...,N-1\right\}$, $W_{N}^{(i)} \preceq  W_{N}^{(L^{(k_1,k_2)}(i))}$.
\end{Proposition}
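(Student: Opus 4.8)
The plan is to follow the template of the proof of Proposition~\ref{Proposition1}, replacing its single-layer comparison by a two-layer \emph{exchange} argument. Reading the nested construction $W_N^{(i)}=(\cdots((W^{i_0})^{i_1})\cdots)^{i_{m-1}}$ as a chain of single-channel maps, write $V^{(j)}$ for the channel obtained by applying one RS-kernel transform to $V$ and selecting its $j$-th sub-channel. First I would isolate the two layers $k_1<k_2$ touched by $L^{(k_1,k_2)}$. Since the digits at positions $\ell\notin\{k_1,k_2\}$ are unchanged, the layers below $k_1$ produce exactly the same input channel $V$ in both $W_N^{(i)}$ and $W_N^{(L^{(k_1,k_2)}(i))}$, and the layers above $k_2$ apply an identical sequence of transforms in both cases. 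Invoking the monotonicity of the single-layer transform under stochastic degradation that is implicit in Proposition~\ref{Proposition1} (if $V_1\preceq V_2$ then $V_1^{(j)}\preceq V_2^{(j)}$ for every $j$), it then suffices to establish the ordering after the block of layers $k_1,\dots,k_2$; the common outer layers preserve it.

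For the core of the argument I would consider first the adjacent case $k_2=k_1+1$, and set $a=i_{k_1}>b=i_{k_2}$ (the swap is triggered precisely when $a>b$). Here the reduced claim is the two-layer exchange inequality $(V^{(a)})^{(b)}\preceq (V^{(b)})^{(a)}$, the RS-kernel analogue of the classical binary left-swap relation, which asserts that placing the stronger (higher-index) single-layer split at the \emph{outer} layer yields the more reliable channel. I would prove it by exhibiting an explicit intermediate channel $W_3$ in the sense of Definition~\ref{Definition2-4} that stochastically degrades $(V^{(b)})^{(a)}$ to $(V^{(a)})^{(b)}$. The construction would lean on the single-layer order $W_{G_q}^{(b)}\preceq W_{G_q}^{(a)}$ established in the preceding subsection, on the output symmetry of $W$, and on the combining/genie structure ($\circledast$ and $\tilde{W}^{\circledast k}$) developed there, so that the two split indices $a$ and $b$ can be re-associated across the two layers.

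The genuinely hard part is the non-adjacent case $k_2>k_1+1$. Here the intermediate layers apply a fixed composite transform, say $T$, to both $V^{(a)}$ and $V^{(b)}$ before the outer split, so the reduced claim becomes $\big(T(V^{(a)})\big)^{(b)}\preceq \big(T(V^{(b)})\big)^{(a)}$ with $T(V^{(b)})\preceq T(V^{(a)})$ and $b<a$. Plain monotonicity is insufficient here: it only compares two channels split at the \emph{same} index, whereas now the worse channel $T(V^{(b)})$ is split at the higher index $a$ and must still dominate the better channel $T(V^{(a)})$ split at the lower index $b$. Moreover this swap cannot be reduced to a sequence of adjacent left-swaps, since the target index $L^{(k_1,k_2)}(i)$ is in general unreachable from $i$ by valid adjacent swaps alone. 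I would therefore attack it by induction on $k_2-k_1$ with a strengthened hypothesis that carries the explicit degrading map of the adjacent case through one additional intermediate layer at a time, crucially using that $V^{(a)}$ and $V^{(b)}$ are not arbitrary channels but coupled sub-channels of the same $V$, so that the reliability gap they create is exactly the one the higher outer split can absorb. Making this coupling quantitative and proving that it survives each intermediate layer is, I expect, the main obstacle.
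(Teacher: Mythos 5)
Your treatment of the adjacent case $k_2=k_1+1$ is essentially the paper's own proof. The paper works over $G_q^{\otimes 2}$, introduces the genie-aided channel $W_{g,N_2}^{(i)}$, observes that the genie-aided channels of $i$ and of $L^{(k_1,k_2)}(i)$ are equivalent, with $Z(W_{g,N_2}^{(i)})=Z(W_{g,N_2}^{(L^{(k_1,k_2)}(i))})=Z(W)^{(i_{k_1}+1)(i_{k_2}+1)}$, and then argues that the swapped channel is less degraded relative to this common genie channel because it conditions on a longer prefix of past symbols; that is exactly your ``exhibit an explicit intermediate degrading channel'' step. Note, though, that the paper never actually constructs that degrading map---it inserts ``we can assume that in the ideal case\dots'' before equation (\ref{e7})---so on this point your plan and the paper's proof share the same unproved kernel. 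Your surrounding scaffolding (identical inner channel $V$ below layer $k_1$, identical outer transforms above $k_2$, and monotonicity of the single-layer transform under stochastic degradation, which is indeed what the proof of Proposition \ref{Proposition1} implicitly uses) is cleaner and more explicit than anything in the paper's write-up.

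The genuine gap is the non-adjacent case, and it is a gap you share with the paper but, unlike the paper, correctly diagnose. The paper disposes of $k_2-k_1>1$ in one sentence: reliability ``can be passed to the left'' by iterating (\ref{e7}), i.e., by composing adjacent swaps. Your objection that $L^{(k_1,k_2)}(i)$ is in general unreachable from $i$ by valid adjacent swaps is correct: over $\mathcal{GF}(4)$ take digits $(i_{k_2},i_{j},i_{k_1})=(1,0,3)$ with $k_1<j<k_2$; the only improving adjacent swaps lead to $(1,3,0)$ and then $(3,1,0)$, never to the target $(3,0,1)$, and the Addition operator cannot help since swaps preserve the digit multiset while additions strictly increase the digit sum. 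In fact $(3,0,1)\preceq(3,1,0)$, so the reachable indices dominate the target rather than being dominated by it---the composition argument runs in the wrong direction. Hence the paper's own closing step fails as written for $q>2$, and your proposed induction on $k_2-k_1$ with a strengthened coupling hypothesis is an attempt to fill a hole the paper does not acknowledge. But you do not fill it: the inductive step---constructing a degrading map that survives an intermediate layer applied to the two coupled sub-channels $T(V^{(a)})$ and $T(V^{(b)})$---is stated as an expectation, not proved. So as a standalone proof your proposal is incomplete precisely where a correct proof must do new work; what you contribute beyond the paper is a sound demonstration that this work is genuinely necessary.
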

\begin{proof} \label{proof2}
Let $W$ be a $q$-ary  memoryless output symmetric channel. First consider the case $k_2=k_1+1$,  which can be seen as two  adjacent polarization operations  based on the RS kernel $G_q$ i.e $G_{q}^{\otimes 2}$. Denote $N_2=q^2$, the 
transition probability of suchannel $W_{N_{2}}^{(i)}$ is shown as 

\begin{equation}
W_{N_{2}}^{(i)}\left(y_0^{N_{2}-1}, s_0^{i-1} \mid s_i\right)=\frac{1}{q^{N_{2}-1}} \sum_{s_{i+1}^{N_{2}-1}} W_{N_{2}}^{(i)}\left(y_0^{N_{2}-1} \mid s_0^{N_{2}-1}\right)
\end{equation}

Consider the  channel $W_{g,N_{2}}^{(i)}$ as the genie-added channel of $W_{N_{2}}^{(i)}$ and its transition probability is defined as 

\begin{equation}
\begin{aligned}
W_{g,N_{2}}^{(i)}\left(y_0^{N_{2}-1}, s_0^{i-1}, s_{i+1}^{N_{2}-1}\mid s_i\right)\\
=\frac{1}{q^{N_{2}-1}}  \prod_{j=0}^{N_{2}-1} W\left(y_i \mid\left(s_0^{N_{2}-1} G_{q}^{\otimes 2}\right)_j\right)
\end{aligned}
\end{equation}

According to the proof in the previous subsection, $W_{g,N_{2}}^{(i)}$ is equivalent to $(\tilde{W}^{\circledast k_1})^{\circledast k_2}$ that is
\begin{equation}
Z(W_{g,N_{2}}^{(i)})=Z(W_{g,N_{2}}^{(L^{(k_1,k_2)}(i))})=Z(W)^{(i_{k_1}+1)*(i_{k_2}+1)}
\end{equation}
Here, we can assume that in the ideal case,  $s_{L^{(k_1,k_2)}(i)}$ and $s_i$ pass through channels $W_{g,q}^{(i_{k_1})}$ and $W_{g,q}^{(i_{k_2})}$ in different orders with the similar  degradation pattern, but more outputs are omitted in $ W_{N_{2}}^{(i)} $ compared to $W_{N_{2}}^{(L^{(k_1,k_2)}(i))} $. Therefore, $ W_{N_{2}}^{(i)} $ is stochastically
degraded with respect to $W_{N_{2}}^{(L^{(k_1,k_2)}(i))} $ i.e.

\begin{equation}
\begin{aligned}
W_{N_{2}}^{(L^{(k_1,k_2)}(i))} 
& \equiv s_{i}\rightarrow\left(y_0^{N_{2}-1},s_0^{L^{(k_1,k_2)}(i)-1}\right) \\
& \succeq s_{i} \rightarrow\left(y_0^{N_{2}-1},s_0^{i-1}\right) \\
& \equiv W_{N_{2}}^{(i)}  \label{e7}
\end{aligned}
\end{equation}

 On the basis of the adjacent RS kernel, the reliability of the sub-channel can be passed to the left in the $q$-ary representation index  through the expression (\ref{e7}). Thus, when  $k_2-k_1>1$, $W_{N}^{(L^{(k_1,k_2)}(i))}\succeq W_{N}^{(i)}$ also holds which concludes the proof.

\end{proof}

\begin{Example} \label{Example2}
\rm
Consider the RS polar code based $\mathcal{GF}(4)$ with  $m=3$ and index $i=27$, its $q$-ary representation is $(1,2,3)$. Then, after $L^{(0,2)}(27)$, we get the index $i=57$ with $(3,2,1)$. Applying the  Proposition 2, we conclude $W_{64}^{(27)} \preceq  W_{64}^{(57)}$.

\end{Example}

\begin{Proposition} \label{Proposition3}
(Quasi-Nested )
\rm
 The reliability orders of subchannels determined for an RS polar code of length $N$ by partial orders remain unchanged in code length of $qN$.

\end{Proposition}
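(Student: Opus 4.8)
The plan is to exhibit a natural, reliability-preserving embedding of the $N=q^{m}$ subchannels into the $qN=q^{m+1}$ subchannels, and then to argue that every order certified by the partial orders of Proposition~\ref{Proposition1} and Proposition~\ref{Proposition2} survives this embedding. Passing from length $N$ to length $qN$ amounts to applying one additional outermost RS-kernel layer, which in the representation $W_N^{(i)}=(((W^{i_0})^{i_1})^{\cdots})^{i_{m-1}}$ corresponds to appending a new most-significant $q$-ary digit. Accordingly, for each $c\in\{0,\dots,q-1\}$ I would define the embedding $\iota_c$ sending $i=(i_{m-1},\dots,i_0)$ to $\iota_c(i)=(c,i_{m-1},\dots,i_0)$, and record the identity
\begin{equation}
W_{qN}^{(\iota_c(i))}=\bigl(W_N^{(i)}\bigr)^{c},
\end{equation}
so that the length-$qN$ subchannel with leading digit $c$ is exactly the single-layer split, with split index $c$, of the length-$N$ subchannel $W_N^{(i)}$.

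The crux is a monotonicity property of the single-layer transform: if two channels satisfy $V\preceq V'$ in the sense of stochastic degradation (Definition~\ref{Definition2-4}), then their splits obey $V^{c}\preceq (V')^{c}$ for every split index $c$. This holds because the single-layer transform consists of deterministic channel combining through $G_q$ followed by marginalization over the undecoded symbols, and both operations preserve stochastic degradation: composing the degrading channel of $V\preceq V'$ with the fixed linear combine map and the marginalization yields a valid degrading channel for each pair of splits. This is precisely the channel-level statement underlying the single-layer degradation analysis given earlier in this section, now invoked with $W_N^{(i)}$ and $W_N^{(i')}$ in place of the raw channel $W$.

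With these two ingredients the conclusion is essentially immediate. Both Proposition~\ref{Proposition1} and Proposition~\ref{Proposition2} are stated for an arbitrary block length, so the Addition and Left-Swap operators $A^{(k)}$ and $L^{(k_1,k_2)}$ with $k,k_1,k_2\le m-1$ act identically on the preserved low-order digits at length $N$ and at length $qN$. Hence any relation $W_N^{(i)}\preceq W_N^{(i')}$ obtained from a chain of these operators (the transitive closure of the two POs) is certified at length $qN$ by the very same chain, yielding $W_{qN}^{(\iota_c(i))}\preceq W_{qN}^{(\iota_c(i'))}$ for every leading digit $c$; the monotonicity property above supplies the channel-level justification, since applying the common outer layer $(\cdot)^{c}$ to $W_N^{(i)}\preceq W_N^{(i')}$ preserves the degradation. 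Thus every reliability order fixed by the POs at length $N$ persists at length $qN$, while the only genuinely new comparisons involve the freshly introduced digit (for instance through $A^{(m)}$), which is exactly what the qualifier \emph{quasi} in quasi-nested reflects.

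I expect the main obstacle to be the rigorous verification of the monotonicity claim, namely that a degrading channel witnessing $W_N^{(i)}\preceq W_N^{(i')}$ induces a degrading channel for each split pair $(W_N^{(i)})^{c}\preceq (W_N^{(i')})^{c}$; this requires checking that combining through $G_q$ and marginalizing do not destroy the degradation relation. A secondary point needing care is fixing the recursion convention so that the added layer is genuinely the most-significant digit, since only then are the low-order digits, and hence the operators $A^{(k)}$ and $L^{(k_1,k_2)}$ acting on them, preserved verbatim between the two lengths. Once the monotonicity lemma and the embedding convention are pinned down, preservation of the orders follows directly and no further computation is required.
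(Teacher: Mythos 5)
Your proposal is correct, and its combinatorial core is exactly the paper's (one-line) proof: the paper observes that an index $(i_{m-1},\dots,i_1,i_0)$ at length $N$ reappears as $(0,i_{m-1},\dots,i_1,i_0)$ at length $qN$, so any chain of $A^{(k)}$ or $L^{(k_1,k_2)}$ operators acting on the unchanged low-order digits certifies the same relation at the larger length --- precisely your observation that Propositions \ref{Proposition1} and \ref{Proposition2}, being valid at every block length, apply verbatim to the preserved digits. You go beyond the paper in two ways. First, you embed with an arbitrary leading digit $c$ rather than only $c=0$, showing the PO-certified order is replicated inside each of the $q$ index blocks $\{cN,\dots,(c+1)N-1\}$; this is a true and strictly stronger statement. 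Second, you introduce a channel-level monotonicity lemma ($V\preceq V'$ implies $V^{c}\preceq (V')^{c}$), which is correct --- the degrading channel witnessing $V\preceq V'$ tensorizes over the $q$ kernel inputs and commutes with the deterministic combining and the marginalization, in the style of Tal--Vardy degradation arguments --- but it is logically redundant for the proposition as stated: once Propositions \ref{Proposition1} and \ref{Proposition2} are invoked at length $qN$, the operator chain alone already yields $W_{qN}^{(\iota_c(i))}\preceq W_{qN}^{(\iota_c(i'))}$, with no need to track how degradation behaves under the transform. What the lemma buys, and what the paper does not claim, is transfer of the actual degradation relation (not merely the PO certificate) from length $N$ to length $qN$, so the nesting would persist even for reliability orderings established by means other than the two partial orders.
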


\begin{proof} \label{proof3}
Denote  $(i_{m-1},...,i_1,i_0)$ as  the  $q$-ary representation of RS kernel polar code in length $N$. Then, in the length $qN$, $(0,i_{m-1},...i_1,i_0)$ also follows the above two partial orders.

\end{proof}

\subsection{ Partial Distance-based Polarization Weight Construction Method}
The concept of Polarization Weight (PW) was initially introduced in \cite{b1-26} to construct polar codes with Arikan kernel. In the previous subsection, we demonstrated that the reliability of a sub-channel is closely linked to the partial distance of its corresponding index in the single-layer RS kernel. Moreover, holding other conditions constant, the reliability of the subchannel increases with the number of layers. As a result, we propose a Partial Distance-based Polarization Weight (PDPW) construction method in this section, which has low computational complexity.

Our investigation revealed that, in RS polar codes, the reliability of a sub-channel with an index $i$ $(i_{m-1},..,i_1,i_0)$, where $i_k\in \left\{0,1,...,q-1\right\}$, can be seen as being affected by two factors.  We refer to  these factors as  the intra-layer polarization effect  associated with $i_k$ and the inter-layer polarization effect related to $k$. Since the reliability of sub-channel with genie-added $W_{g,N}^{(i)}$ can be represented by $Z(W)^{\prod_{k=0}^{m-1}D_{i_k}}$ with $0<Z(W)<1$, we represent the reliability of sub-channel with genie-added by
\begin{equation}
w_{g,N}(i)\mapsto \prod_{k=0}^{m-1}D_{i_k}
\end{equation}
 where the higher the value of  $w_{g,N}(i)$ is, the greater its reliability. However, since  $W_{g,N}(i)$ is an ideal situation, we factor in the two influencing factors mentioned earlier to adjust its value. Using the concept of polarization weight, we represent the reliability of a sub-channel $W_N^{(i)}$ by
\begin{equation}
w(i)\mapsto \prod_{k=0}^{m-1}D_{i_k}^{\zeta(i_k)\beta^k}
\end{equation}
where $\zeta(i_k)$ and $\beta^k$ denote the correction factors for the intra-layer polarization effect and inter-layer polarization effect, respectively. To simplify the computation, we introduce logarithms and  derive the final form of the PDPW construction method as shown in Definition \ref{Definition3-4}.

\begin{Definition} \label{Definition3-4}
(PDPW):  For a $N=q^m$ polar code with RS kernel, denote $(i_{m-1},...,i_1,i_0)$ as the  $q$-ary representation of the index $i$ in (\ref{e6}) and $D_{i_k}$ as the the partial distance of the index $i_k$ in a single-layer kernel. Then, the Partial Distance-based Polarization Weight $w(i)$ is defined as 
\begin{equation}
f^{\mathrm{PDPW}}: w(i) \mapsto \sum_{k=0}^{m-1} \zeta(i_k)\beta^k log_2D_{i_k} 
\end{equation}

where $\beta^k$ and  $\zeta(i_k)$ are the  correction factors for inter-layer polarization effect  and  intra-layer polarization effect, respectively. These numbers require careful selection, backed by comprehensive experimentation and justification.

\end{Definition}

As discussed earlier,  the  sub-channel  $W_{G_q}^{(i)}$ in the single-layer RS kernel can be interpreted  as  the degradation of the genie-added channel $W_{g,q}^{(i)}$. Consequently, $\zeta(i)$ can be viewed as the proportion of the  sub-channel's reliability relative to the genie-added channel in a single RS kernel. Through the simulation tests, we observed  that under certain channel conditions,  $\zeta(i)$ can be approximated by mutual information as below:


 \begin{equation}
\zeta(i)=\frac{I(W_{G_q}^{(i)})}{I(W_{g,q}^{(i)})}
\end{equation}

Referring to \cite{b1-28}, one can use Monte-Carlo simulations to approximate the mutual information of the corresponding channel by calculating the entropy of the symbol probabilities in the single-layer RS kernel and averaging them over sufficiently large number of channel output instances. Denote $H(S_i)$ as the entropy function of $S_i$, $Pr_i(\eta)=W_{G_{q}}^{(i)}\left(Y_0^{q-1}, S_0^{i-1} \mid S_i=\eta\right)$ and $Pr'_i(\eta)=W_{g,q}^{(i)}\left(Y_0^{q-1}, S_0^{i-1},S_{i+1}^{q-1} \mid S_i=\eta\right)$
, where $S_i$ is a random variable corresponding to the $i$-th input symbol of the single RS kernel and $\eta\in\mathcal{GF}(q)$. Then the corresponding mutual information functions can be  calculated as follows:

\begin{equation}
\begin{aligned}
I(W_{G_q}^{(i)}) & =H(S_i)-H(Y_0^{q-1},S_0^{i-1}|S_i) \\
&\approx \frac{1}{T}\sum_{z=0}^{T} \hat{I}_z(W_{G_q}^{(i)})
\end{aligned}
\end{equation}
 where 
\begin{equation}
 \hat{I}_z(W_{G_q}^{(i)})=H(S_i)+\sum_{\eta\in\mathcal{GF}(q)} Pr_i(\eta)log_2(Pr_i(\eta))
\end{equation}

Since $W_{g,q}^{(i)}$ is the channel with genie-added information, we assume that the Bhattacharyya parameter of $W_{g,q}^{(i)}$ in the single-layer RS kernel reaches its  average error probability bound i.e. 
\begin{equation}
P_e(W_{g,q}^{(i)})\approx (q-1)Z(W_{g,q}^{(i)})=(q-1)Z(W)^{D_i}  
\end{equation}
  We can then approximate the value of $Pr'_i(\eta)$ by  $W_{G_q}^{(q-1)}=W_{g,q}^{(q-1)}$. Next, calculating $I(W_{g,q}^{(i)})$  by using  the above mutual information approximation functions and finally obtaining the $\zeta(i)$ corresponding to the sub-channel within a single-layer RS kernel. Note that the coefficient $\zeta(0)$ can be directly assigned a value of zero as it does not make a  contribution to the polynomial. Similarly, the coefficient  $\zeta(q-1)$ can be directly set to one as the entire assumption is based on the premise of $W_{G_q}^{(q-1)}=W_{g,q}^{(q-1)}$.

 For the coefficient $\beta$, as $k$ is the non-negative, $\beta^k>0$ which follows the Addition partial order. To achieve the Left-Swap partial order, $\beta$ should  satisfy $\beta^k<\beta^{k+j}$ for any non-negative integer $k$ and positive integer $j$, which implies that $\beta>1$. Once the coefficient $\zeta(i_k)\log_2D_{i_k}$ is obtained, the value of $\beta$ can be approximated using a polynomial equation solver, as described in \cite{b1-26}.

\begin{Example} \label{Example3}
\rm
Taking RS kernel $G_4$ as an example. The coefficient $\zeta(i)$ of $G_4$ at different signal-to-noise ratios (SNR) over additive white Gaussian noise (AWGN) channels are shown in  Table I. Here we choose the $\zeta(i)$ values at Eb/N0=-1.8dB as in the Monte-Carlo method, which have the best pre-test performance.

 \begin{figure}[!h]
\centering
\includegraphics[width=3.2in]{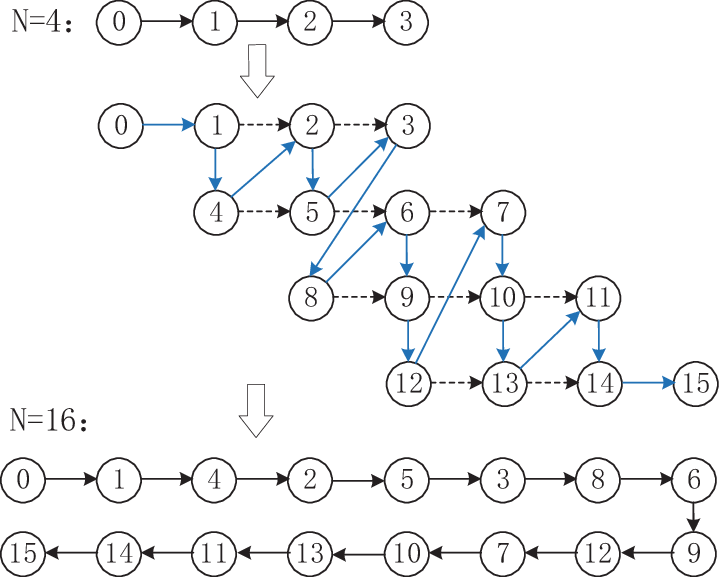}
\caption{ Polarization weights of sub-channels with block length N=256 }
\label{fig:figure3}
\end{figure}
When the block length increases, the uncertainty relation among them is calculated according to the actual situation, and the interval of  $\beta$ is approximated. Fig.3. illustrates how the range of $\beta$ evolves as the block length increases from N=4 to N=16 over AWGN channels. The uncertainty relations including (3,8), (7,12), (10,13) etc. are calculated on the basis of $\beta>1$ as follows:

\begin{equation}
\begin{aligned}
            7 \succ 12     \iff  \beta <1.55         \nonumber  \\
            13 \succ 10     \iff  \beta >1.12      \nonumber  \\
              8 \succ 3     \iff  \beta >1.437      \nonumber  \\
\end{aligned}
\end{equation}
Thus, we obtain an approximate range $(1.437, 1.55)$ for $\beta$. Then, by increasing the block size, a more accurate range of  $\beta$ can be calculated.

Let $\beta=1.512$  and the  block length is set to  $N=256$. Then, the weight of the sub-channel with index $99$, i.e., $(1,2,0,3)$ can be computed as 

\begin{equation}
\begin{aligned}
w(99) &= \zeta(1)\beta^{3}log_{2}2+\zeta(2)\beta^{2}log_{2}3+0+\zeta(3)\beta^{0}log_{2}4 \nonumber\\
      &=7.648
\end{aligned}
\end{equation}
Similarly, we can construct the polarization weights of the sub-channels of the overall codeword, as shown in Fig.\ref{fig:figure4}.
\end{Example}
\begin{figure}[!h]
\centering
\includegraphics[width=3.2in]{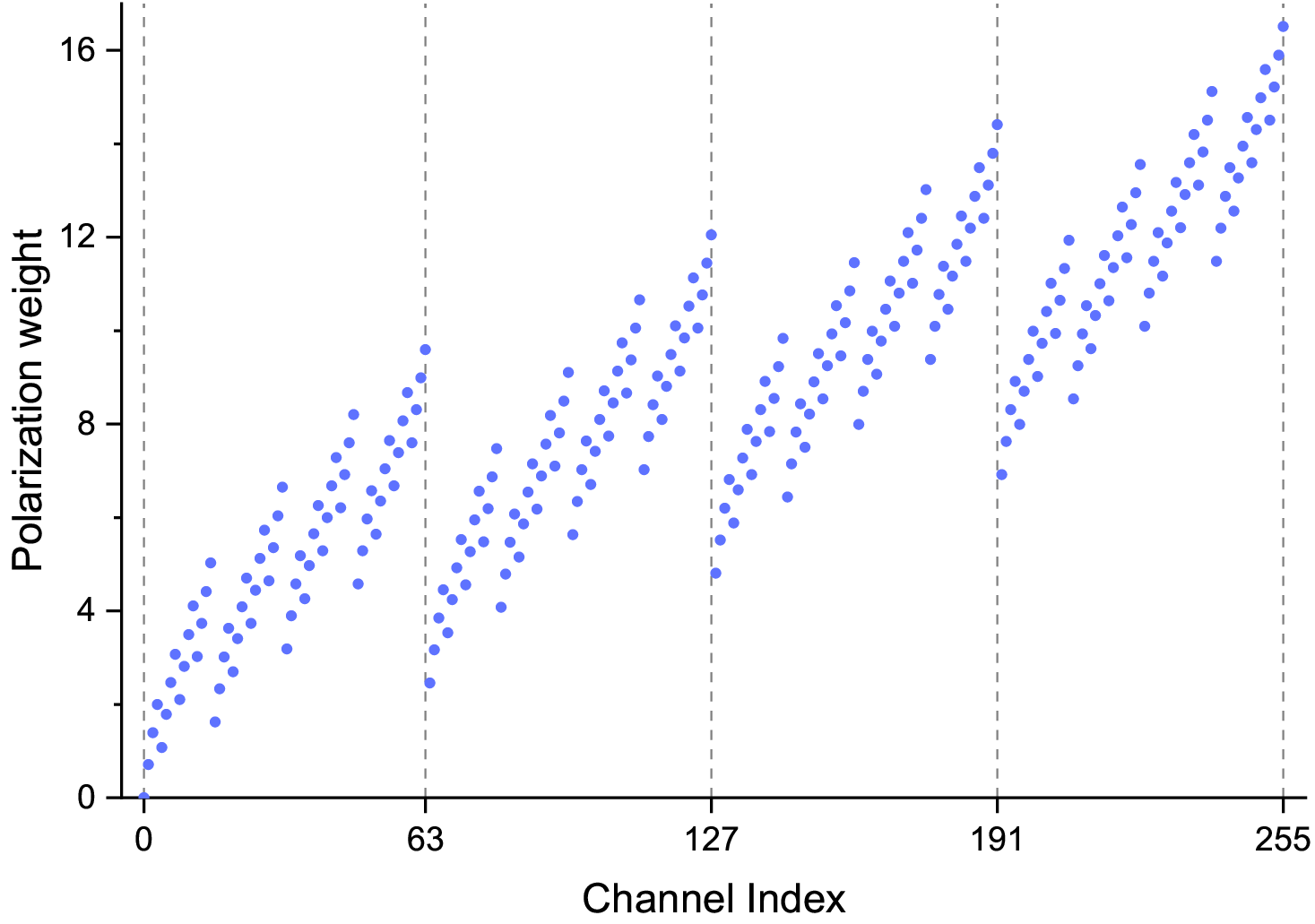}
\caption{ Polarization weights of sub-channels over $G_4$ with block length $N=256$ }
\label{fig:figure4}
\end{figure}

Finally, we analyze the complexity of the proposed construction method. Denote $ \Omega$ as the complexity of a single evaluation of (\ref{ei}). The complexity of the approximate functions $\zeta(i)$ is $\mathcal{O}(Tq\Omega)$ and the PDPW construction complexity is $\mathcal{O}(N)$.  In comparison,  the  complexity of the Monte-Carlo method can be expressed as $\mathcal{O}(TNlog_q(N)\Omega)$. Compared to  classical Monte-Carlo simulation, the proposed PDPW method requires only one approximation of the mutual information of sub-channels for a single-layer RS kernel, thus greatly reducing the construction complexity.

\section{ Minimum Polarization weight Puncturing scheme }\label{sec:section4}

Puncturing  is an efficient rate-matching technique that is commonly used in 5G communication standards for low-rate polar codes \cite{b1-12}.  Fixed  information set is an essential part of incremental redundancy hybrid automatic repeat request (IR-HARQ) in 3GPP NR. Referring to the PDPW construction method  in the previous section,  we propose  Minimum Polarization weight Puncturing (MPWP) algorithm for  fixed  information set which is shown as follows:

Denote the $N$ symbols length  polar codes with RS kernel in bits as $N_b=tq^m$. According to the transmission requirements, the actual number of transmitted  bits is $M_b$, where $M_b<N_b$.

\begin{itemize}
\item Based on the polarization weights of the subchannels, the fixed information set can be selected as $\mathcal{I}$.  Then, from the set $\mathcal{I}^c$,  $l+1$ subchannels with the smallest weights are selected to determine the  puncturing symbols vector $\mathcal{R}_p=(r_{0}, r_{1}, \ldots, r_{l-1}, )$,  where    $l=\lceil (N_b-M_b)/t\rceil$ and $w(r_{0})<w(r_{1})<\ldots< w(r_{l-1})$.
\item On the encoder side, do not transmit the first $l-1$ symbols of $\mathcal{R}_p$ that is freezing them in input symbols $s_{0}^{N-1}$. Note that for index $r_{l-1}$, special judgments are required: Denote $\sigma=(N_b-M_b)$ mod $t$. If  $\sigma>0$, disable the last $\sigma$ bits of $q$-ary representation
of index $r_{l-1}$. Otherwise, disable the  whole symbol $r_{l-1}$.
\item On the decoding side, the decoding architecture is not changed. Since the symbols and bits represented in $\mathcal{R}_p$ are not actually transmitted, their corresponding LLRs are set to 0.
\end{itemize}

\section{SIMULATION RESULTS }\label{sec:section5}
In this section,  we present the simulation results for the proposed construction method and rate-matching algorithm. For the following simulations, the RS polar codes are over $\mathcal{GF}(4)$ and  exploited the CA-SCL decoding algorithm with CRC=8 bits and list size L=2. All codewords are modulated using the binary phase-shift keying (BPSK)  and transmitted over additive white Gaussian noise (AWGN) channels.

In the first trial, we evaluate the performance of the proposed construction method at code rates  R=1/3, R=1/2, and R=2/3. Fig.\ref{fig:figure5} and Fig.\ref{fig:figure6} illustrate the performance of the proposed  PDPW construction method and the Monte-Carlo method with symbol lengths of $N=1024$ and $N=256$, respectively. As expected, the performance of  the two construction methods is essentially the same for the two block lengths and three code rates. Fig.\ref{fig:figure6} also includes the  performance of 3GPP NR polar codes over Arikan kernel with a bit length of  $N_b=512$ for comparison. Overall, the performance of the proposed PDPW construction method over $\mathcal{GF}(4)$ RS kernel  is better than 3GPP polar codes over  Arikan kernel. Specifically when BLER=$10^{-3}$, the proposed method yields nearly 0.23db and 0.27db gain over 3GPP NR polar codes for rates of R=1/3 and R=1/2 respectively.
\begin{figure}[!h]
\centering
\includegraphics[width=3.2in]{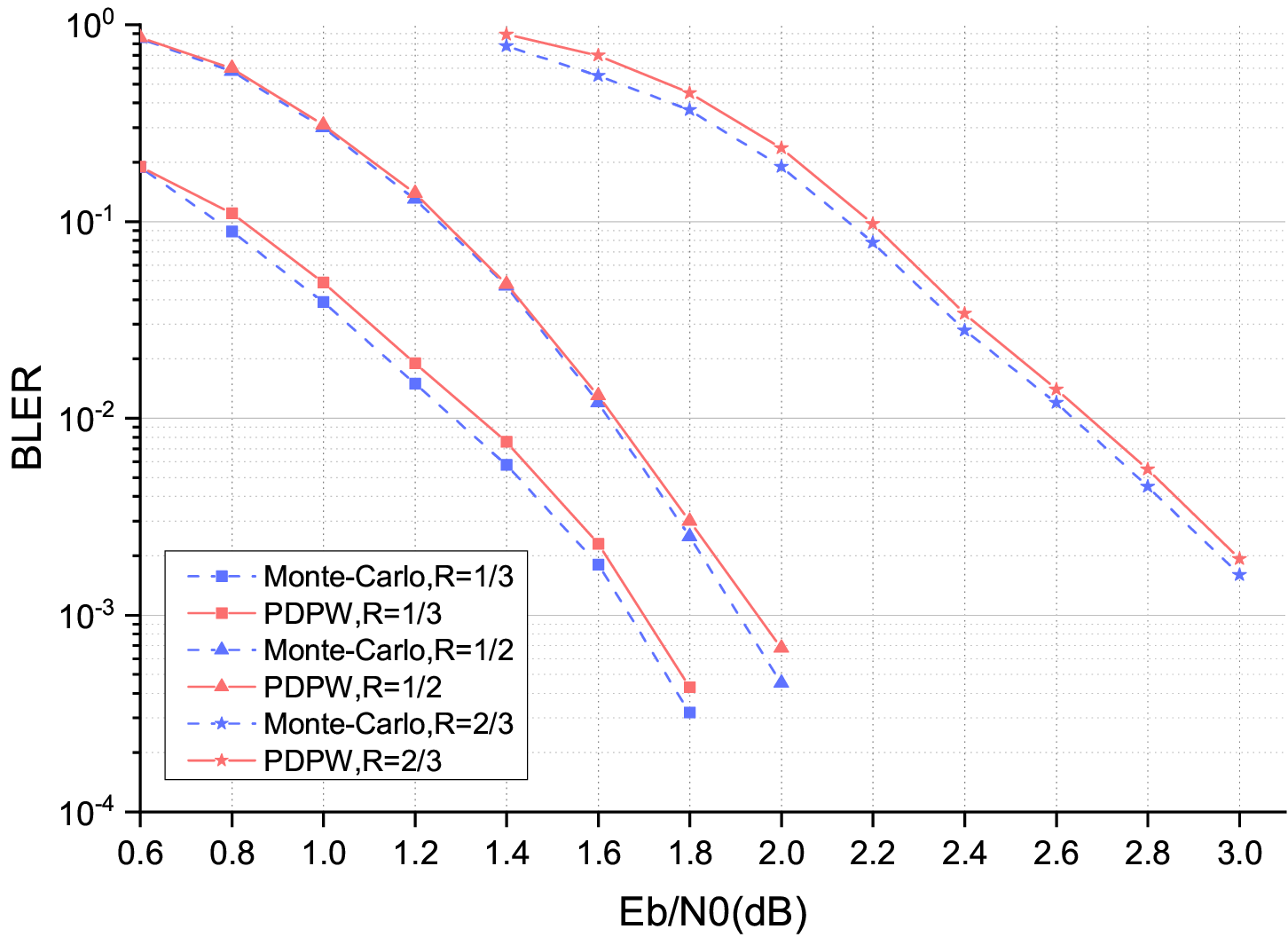}
\caption{ Performance of RS polar codes with block length $N=1024$ }
\label{fig:figure5}
\end{figure}

\begin{figure}[!h]
\centering
\includegraphics[width=3.2in]{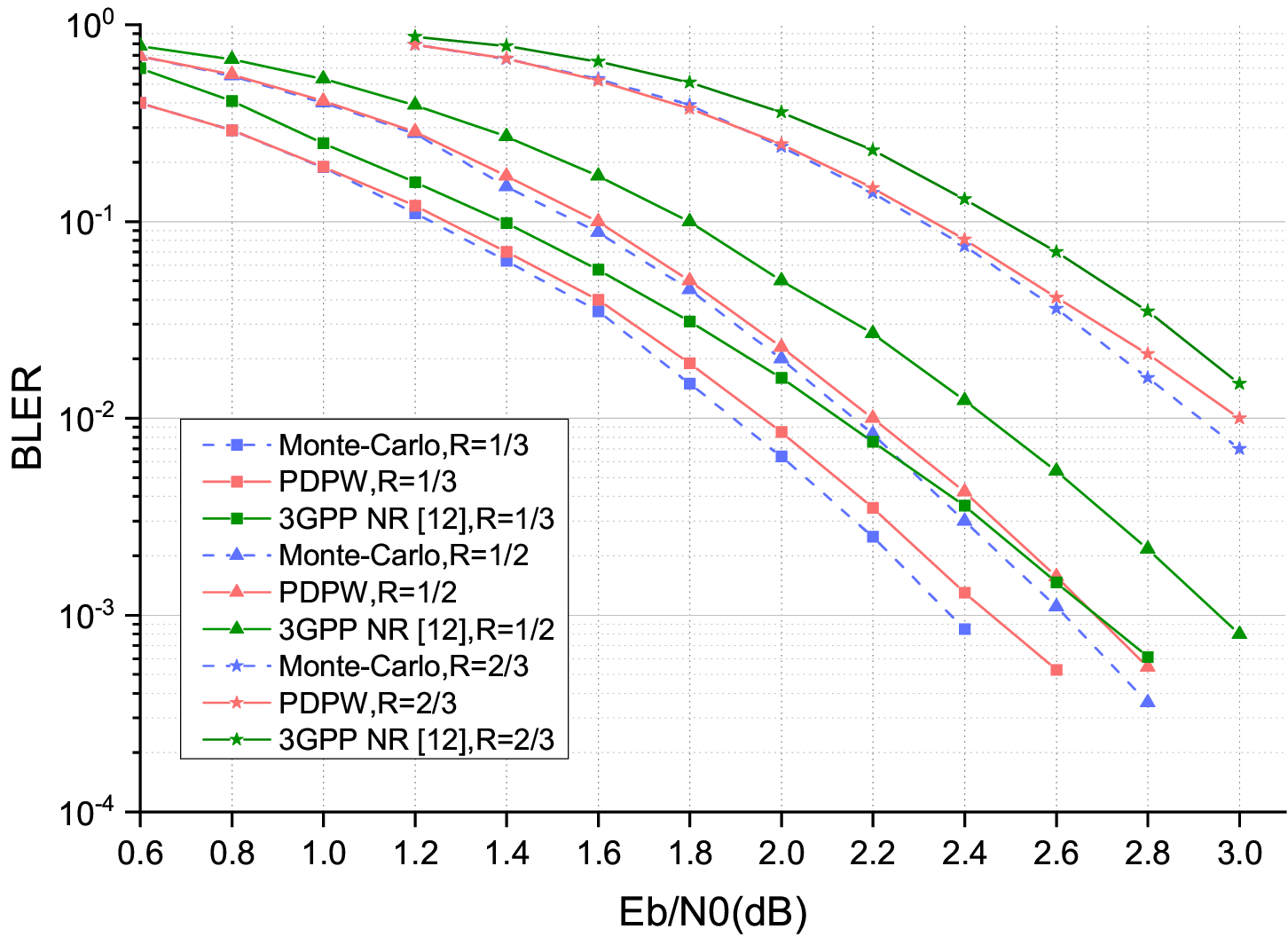}
\caption{ Performance of RS polar codes and 3GPP NR polar codes  with block length $N_b$=512 }
\label{fig:figure6}
\end{figure}

Then, we will mainly compare the rate-matching performance for fixed information set. In the following simulations, the information set is optimized by the mother codes.
Fig.\ref{fig:figure7}. shows the performance of the proposed algorithm and the smallest index puncturing (SIP) algorithm \cite{b1-22} over $\mathcal{GF}(4)$ with mother code length $N=1024$ and actual block length $M=800$. In this trial, our proposed algorithm performs the same as SIP at a low code rate but has a larger gain at a higher code rate. This is because our algorithm retains more highly reliable sub-channels when the code rate is increased.

\begin{figure}[!h]
\centering
\includegraphics[width=3.2in]{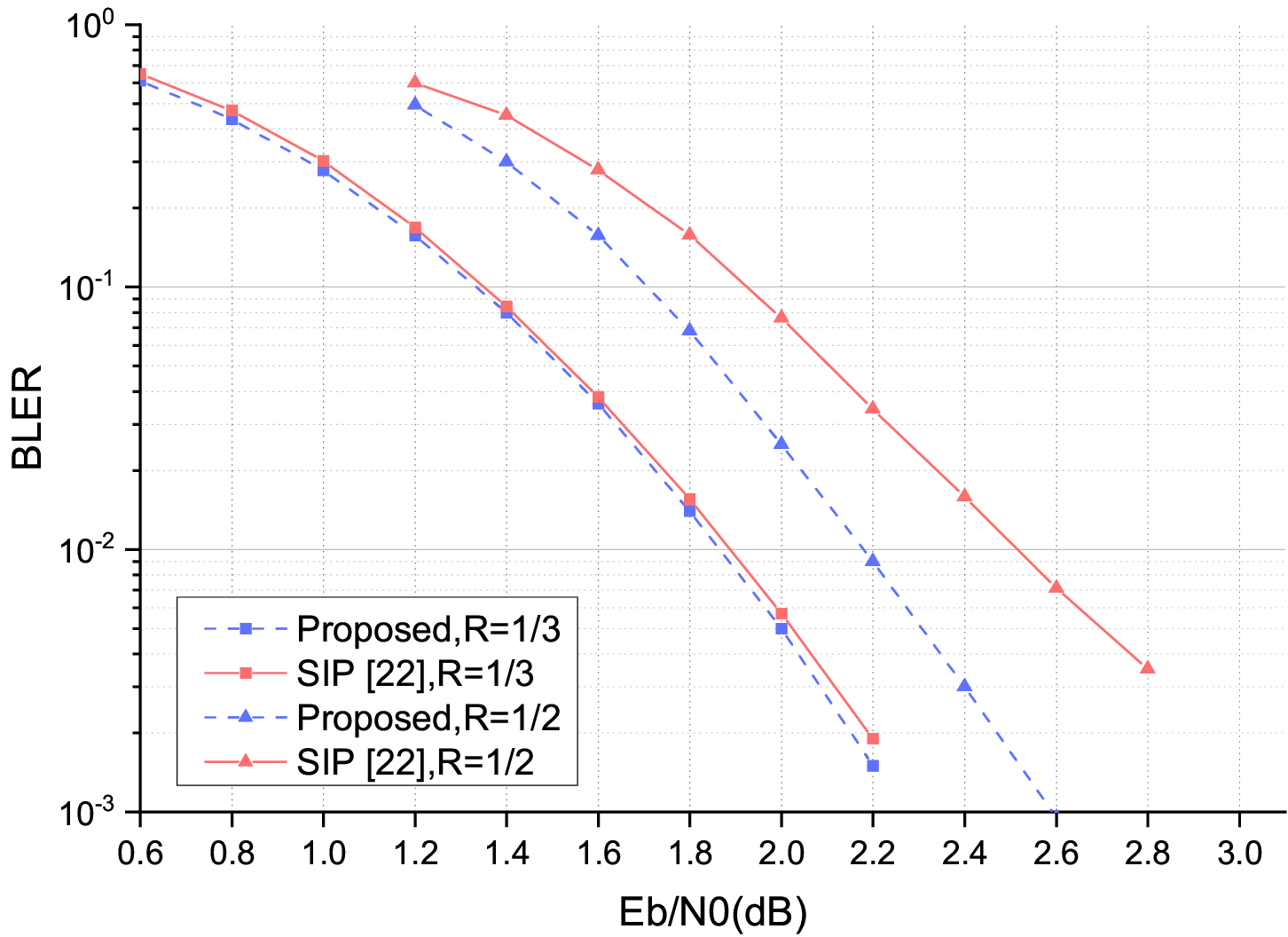}
\caption{  Rate-matching performance of RS polar codes with block length $N=1024$ and $M=800$ }
\label{fig:figure7}
\end{figure}

Furthermore, Fig.8. illustrates the puncturing performance of the proposed method  over $\mathcal{GF}(4)$ RS kernel and the 3GPP NR polar codes  over  Arikan kernel. Note the 3GPP NR polar codes  are constructed utilizing the 3GPP channel reliability table and applied pre-frozen set  regarded as an optimization of the puncturing method in 3GPP NR \cite{b1-12}. This simulation tests the SNR required to achieve BLER=$10^{-3}$ for two codewords at different lengths. The length of the mother code $N_b=512$, the overall code rate is $R=1/3$.  Results indicate that for most lengths greater than 340, the proposed puncturing algorithm for RS polar codes achieves a gain of approximately 0.2dB compared to the 3GPP NR polar codes.

\begin{figure}[!h]
\centering
\includegraphics[width=3.2in]{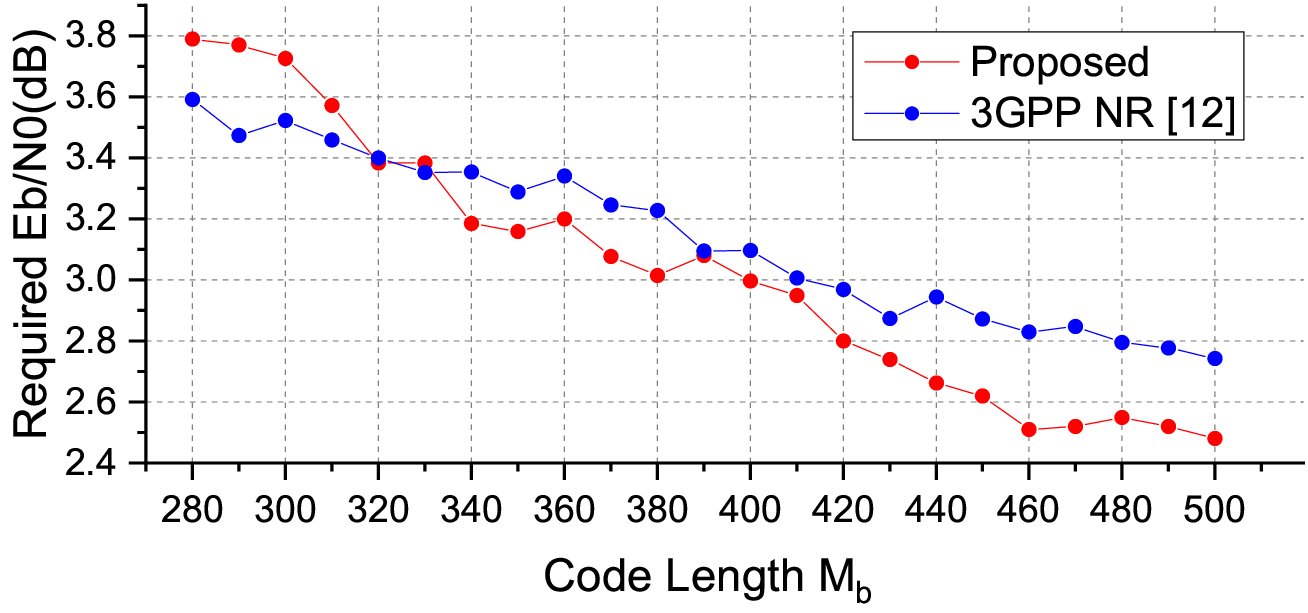}
\caption{ Schematic diagram of the required Eb/N0 for different codes to manage a performance of BLER=$10^{-3}$}
\label{fig:figure8}
\end{figure}

\section{ CONCLUSION }\label{sec:section6}
In this work, we investigate the  partial orders of polar codes with RS kernel and propose a PDPW construction method based on these partial orders, as well as an MPWP puncturing algorithm. The performance of the proposed construction method is nearly identical to that of the Monte Carlo  method, and the proposed  puncturing algorithm outperforms the current SIP algorithm. Based on the proposed construction scheme, both the standard codeword performance and rate-matching performance of RS polar codes are superior to the 3GPP NR polar codes. Therefore, RS polar codes can be considered an important candidate for channel coding in the next generation of communication systems. 

For multi-kernel polar codes, due to the increasing complexity of the kernel, the decoding algorithm becomes a critical area for future research. Specifically, developing a lower-complexity approach for processing soft information in different multi-kernels during the decoding algorithm is of utmost importance for improving the practicality and effectiveness of these codes.

\ifCLASSOPTIONcaptionsoff
  \newpage
\fi

\balance

\end{document}